\documentclass[%
 reprint,
superscriptaddress,
 amsmath,amssymb,
 aps,
 prl
]{revtex4-2}

\makeatletter
\let\frontmatter@footnote@produce\frontmatter@footnote@produce@footnote
\makeatother

\usepackage{braket}
\usepackage{tikz}
\usepackage{adjustbox}

\usepackage{graphicx}
\usepackage{dcolumn}
\usepackage{bm}

\usepackage[colorlinks=true, citecolor=blue, linkcolor=blue, urlcolor=blue]{hyperref}
\hypersetup{breaklinks=true}
\definecolor{myred}{HTML}{a82a2a}
\definecolor{myblue}{RGB}{30,80,190}    
\definecolor{corrred}{HTML}{A82A2A}     
\definecolor{causbrown}{HTML}{5C3317}   
\newtheorem{theorem}{Theorem}
\usepackage{mdframed}
\surroundwithmdframed[innertopmargin=6pt, innerbottommargin=6pt,
    innerleftmargin=8pt, innerrightmargin=8pt, skipabove=8pt,
    skipbelow=8pt, nobreak=true]{theorem}
\newcommand{\qed}{\hfill\ensuremath{\square}}
\usepackage{dsfont}

\begin{document}

\title{Correlation versus Causation in Quantum Criticality}

\author{Conrad Wichmann}
\email{cwichmann@fas.harvard.edu}
\affiliation{Department of Physics, Harvard University, Cambridge, Massachusetts 02138, USA}
\affiliation{Pritzker School of Molecular Engineering, University of Chicago, Chicago, Illinois 60637, USA}

\author{Ryan Thorngren}
\affiliation{Mani L. Bhaumik Institute for Theoretical Physics, Department of Physics and Astronomy, University of California, Los Angeles, California 90095, USA}

\author{Ruben Verresen}
\affiliation{Pritzker School of Molecular Engineering, University of Chicago, Chicago, Illinois 60637, USA}

\date{\today}

\begin{abstract}

Correlation functions $\braket{O_1(x)\, O_2(0)}$ reveal scaling dimensions through spatial decay. We instead consider static susceptibility, the change in $\braket{O_1(x)}$ from perturbing the Hamiltonian by $O_2(0)$, which we term causation for short. In a conformal field theory (CFT), dimensional analysis predicts decay of $|x|^{-2\Delta}$ for correlation and $|x|^{-2\Delta+1}$ for causation. Yet we find causation can decay up to fifteen additional orders in $x$ through a general mechanism, which we trace to time-derivative fields being unable to contribute to static response. In higher-dimensional CFTs, this mechanism ensures leading causation arises from primaries, even when descendants dominate correlation, which we leverage with DMRG to identify a previously unresolved corner primary of $\Delta \approx 8.8$ and a heavy magnetic line defect primary of $\Delta \approx 4.6$ in the $(2+1)$D critical Ising model. Moreover, the same mechanism governs edge-mode localization in $(1+1)$D gapless symmetry-protected topological phases, explaining previously observed anomalously small edge-mode splittings and guiding our construction of spin chains with splittings as small as $1/L^{18}$ and $1/L^{25}$.

\end{abstract}

\maketitle

\textit{Introduction---}Correlation, a \textit{passive} measure of statistical dependence between observables in physical systems, is ubiquitous in quantum field theories. Within quantum many-body physics in particular, correlation functions are of significant interest because their spatial decay characterizes the system's phase and other universal properties \cite{Sachdev2011, Wen2004, Fradkin2013}. Yet many-body physics is often concerned with \textit{active} behavior: how one observable responds when the Hamiltonian is perturbed by another, as encoded in the static susceptibility \cite{Kubo1957, Kubo1966, Polkovnikov2011, Forster1975, Giuliani2005}. In this work, we investigate how static susceptibility scales at criticality and show that, much like the correlation function, it encodes information characterizing the corresponding universality class. Motivated by this parallel, we refer to this quantity as the `causation function.'

Given their closely related physical interpretations, one might expect strongly correlated observables to exert strong causal influence. The first key result of our work is that this expectation fails at criticality. There is a wide class of critical systems in which causation is far more suppressed than correlation: while correlations decay algebraically, causation can decay fifteen orders of magnitude faster, in some cases even exponentially \footnote{Here, we compare two equilibrium quantities, which differs from the known effect that the $q \to 0$ and $\omega \to 0$ limits of the dynamical susceptibility need not commute \cite{Forster1975}.}.

This suppression has significant implications for the stability of topological edge modes. A conformal field theory (CFT) can subdivide into distinct symmetry-enriched versions, many of which are gapless symmetry-protected topological phases (gSPTs) hosting sharply localized edge modes \cite{Scaffidi2017,Verresen2021, Kestner2011, Grover2012, Keselman2015, Parker2018, Verresen2018, Verresen2020, Borla2021, Thorngren2021, Duque2021, Hidaka2022, MaZouWang2022, Yu22, WenPotter2023, WenPotter2023b, LiOshikawaZheng2024, SuZeng2024, Yu2024, Prembabu2024, Prembabu25, Prembabu25b}. In the Majorana CFT, prior work computed exceptionally small finite-size splittings for these modes: exponentially so in the free case \cite{Verresen2018}, and as $1/L^{14}$ with interactions \cite{Verresen2021}. Our second key result provides a general explanation for these model-specific findings: edge mode localization is quantified by causation and hence inherits the bulk's strong causal decay. Guided by this mechanism, we construct two new gapless SPTs in quantum spin chains whose splittings, $1/L^{18}$ and $1/L^{25}$, are suppressed well beyond prior work.

Causation is especially informative in higher dimensions, where there has long been significant interest in the spectrum of $(2+1)$D CFTs \cite{Polyakov1970, Wilson1972, Polyakov1974, Rychkov2017, Henriksson2023} and their boundaries and defects \cite{GaiottoMazacPaulos2014,
Metlitski2022, Padayasi2022, ParisenToldin2022, ZhouGaiotto2024,
ZhouZou2025, Lanzetta2025}. Approaches include the conformal bootstrap \cite{Rattazzi2008, ElShowk2012, Kos2016, SimmonsDuffin2017, Poland2019} and, more recently, fuzzy sphere regularization \cite{Zhu2023, HeZhu2026}. While scaling dimensions can be extracted from lattice correlations \cite{Pelissetto2002, Hasenbusch2010}, microscopic operators can be dominated by descendant fields, so correlation need not expose primary dimensions \cite{Mong2014, Zou2020, Sandvik2024}. Remarkably, causation can filter these descendants out. Our third key result is that beyond one spatial dimension, the leading causation of a generic lattice operator \footnote{By generic, we refer to lattice operators without definite spatial parity.} arises only from primaries, suggesting causation functions are good probes of the operator spectrum. We demonstrate this in the $(2+1)$D critical Ising model, where the response to a corner perturbation, measured at the opposite corner and at a magnetic line defect, uncovers a corner primary of dimension $\Delta \approx 8.8$ and a defect primary of $\Delta \approx 4.6$, both invisible to correlation.

\textit{Causation---}The relationship between two observables, $A(x)$ and $B(y)$, in the quantum ground state is typically expressed through their `correlation,' $\bra{0} A(x) B(y) \ket{0}$. Here, we consider a different probe of this relationship, namely \textit{causation}: if we perturb a Hermitian system, $H$, as $H_\lambda = H + \lambda B(y)$, how does $A(x)$ respond in the perturbed ground state? Denoting the perturbed ground state by $\ket{0}_\lambda$, perturbation theory gives \cite{Kubo1957} \begin{align} \label{eq:causation_def}
    \lim_{\lambda \to 0} \partial_\lambda \braket{A(x)}_\lambda \propto \braket{A(x)  \; G  \; B(y)} + \text{h.c.}, 
\end{align}where $G = (E_0 - H)^{-1}$ is the resolvent \footnote{We assume $A$ and $B$ have vanishing ground-state expectation values and that the ground state is non-degenerate, such that Eq.~(\ref{eq:causation_def}) is well-defined.}. Hence, causation is quantified by the static Green's function, $\braket{A(x)  \; G  \; B(y)}$, which we term the causation function.

By dimensional analysis, correlation and causation are expected to exhibit similar scaling at criticality. Indeed, if $\Delta_A$, $\Delta_B$, and `$-1$' denote the CFT scaling dimensions of $A$, $B$, and $G$, respectively (more generally, $G$ has dimension $-z$ with $z$ the dynamical critical exponent), then correlation scales as $  1/|x-y|^{\Delta_A + \Delta_B}$ and causation as $  1/|x-y|^{\Delta_A + \Delta_B - 1}$ \cite{DiFrancesco1997}. Surprisingly, however, in time-reversal-symmetric systems, causation can be far more suppressed.

This suppression occurs because operators which are time derivatives of other operators exhibit vanishing causation. To see this, let $A(x) \propto \partial_t C(x) = i[H, C(x)]$ be a time derivative of a third operator, $C(x)$, and assume all operators have vanishing expectation values. Then \begin{equation}
\begin{aligned} \label{eq:time_descendant} 
\braket{A\, G\, B} + \text{h.c.}
&= i \braket{ [H-E_0,C]\, G\, B} + \text{h.c.} \\
&= i \braket{ C(E_0-H)\, G\, B} + \text{h.c.} \\
&= i \braket{ [C(x),B(y)] } = 0,
\end{aligned}
\end{equation}where to get the third equality we used the result $(E_0 - H)G = \mathbb I - P_{gs}$, with $P_{gs}$ the ground state projector. The final expression vanishes whenever $|x-y|$ is sufficiently large that $C(x)$ and $B(y)$ commute due to disjoint support. Moreover, while the Hermitian combination vanishes generally, the individual causation function $\braket{A \; G \; B}$ vanishes when $H$ is time-reversal symmetric and both operators have the same $T$-charge \footnote{This renders $\braket{A \; G \; B}$ real; vanishing follows by anti-Hermiticity of $i \braket{C  B}$. }.

Now consider a microscopic model with lattice operator $O_i$ which flows in the continuum limit as $O_i \to \phi_1(x_i) + \phi_2(x_i) + \ldots$. Correlation and causation of $O_i$ are often dominated by different $\phi_i$. Correlation is controlled by the lowest-dimension field in the expansion, $\phi_\text{min}$, and decays as $|x-y|^{- 2 \Delta_{\phi_\text{min}}}$ \cite{Cardy1996}. By contrast, causation is controlled by the lowest-dimension field that is \textit{not} a time derivative, $\phi_\text{min}'$, and thus decays as $|x-y|^{- 2 \Delta_{\phi_\text{min}'} + 1}$. 

The suppression of causation is most dramatic for edge operators, where there are fewer non-time-derivative descendants. On a $(0+1)$D defect or boundary of a generic CFT, every descendant is either a time derivative or a quasiprimary \cite{Cardy2004}, so causation is controlled by the lowest-dimension \textit{quasiprimary}. In higher dimensions, quasiprimaries coincide with primaries \cite{DiFrancesco1997}, so causation of $(0+1)$D hinge or defect operators is controlled by the lowest-dimension \textit{primary} field.
\begin{figure}[t]
  \centering
  \includegraphics[width=0.9\columnwidth]{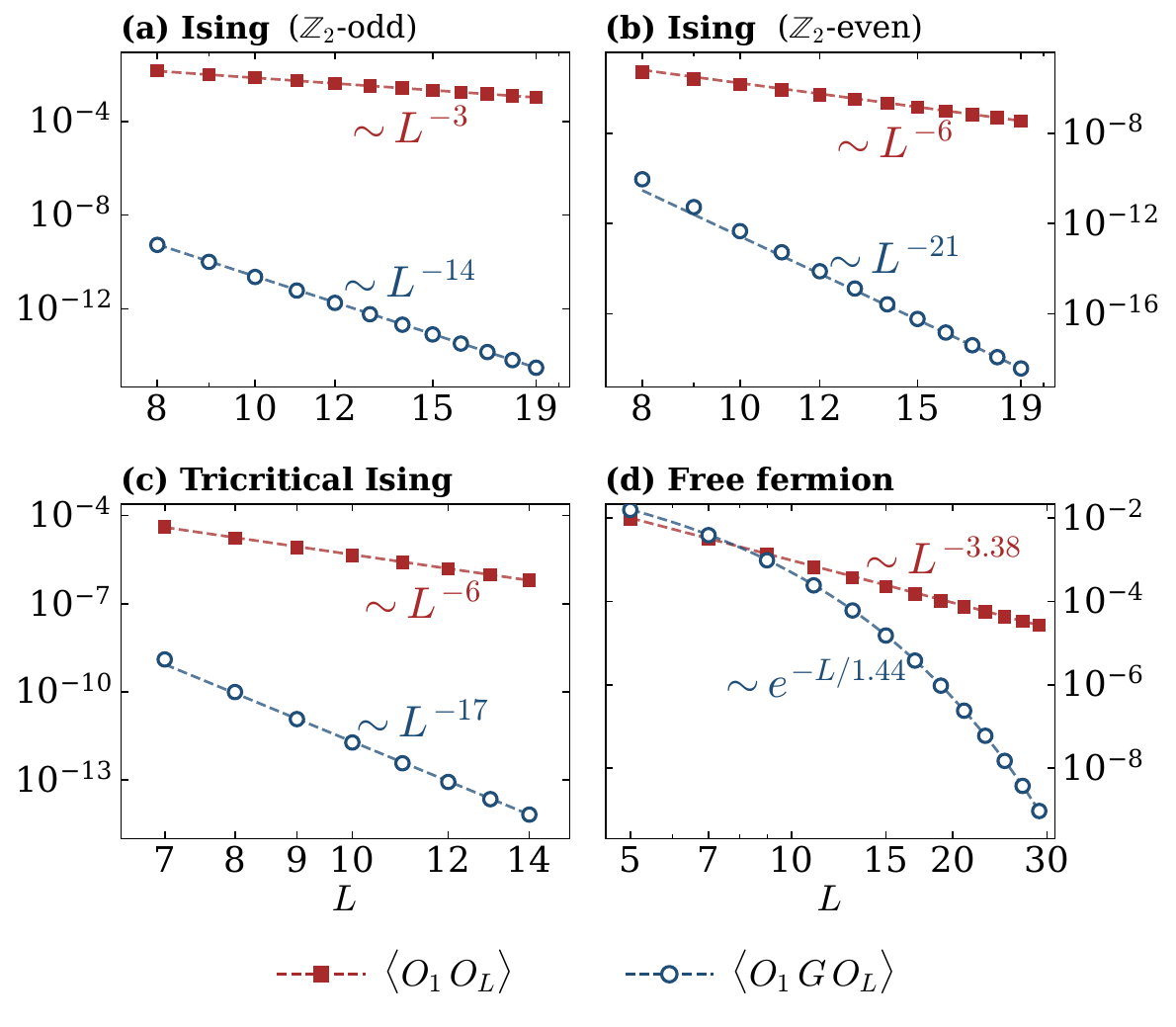}
  \vspace{-1.2em}
  \caption{
  \textbf{Correlation vs.\ causation in $(1+1)$D critical systems.} In each panel, the edge-to-edge correlation $\braket{O_1 O_L}$ and causation $\braket{O_1 \, G \, O_L}$ are computed between a $\mathbb Z_2^T$-odd operator $O_1$ at the left end and its mirror image $O_L$ at the right end. (a,b) Critical Ising with $O_1 = Y_1 X_2$ ($\mathbb{Z}_2$-odd) and $O_1 = Y_1 Z_2 X_3$ ($\mathbb{Z}_2$-even), respectively. (c) Tricritical Ising via the spin-1 Blume-Capel chain with $O_1 = S^y_1 S^z_2 S^x_3$. (d) BDI Majorana chain with parameters $(t_{-1}, t_0, t_1) = (1, 3, 2)$ and $t_\alpha = 0$ otherwise, with $O_1 = \tilde\gamma_1$ and $O_L = \gamma_L$. Panels (a,b,d) are obtained by free-fermion diagonalization; (c) by exact diagonalization in the spin basis.
  }
  \label{fig:figure1}
\end{figure}

\textit{Example: Ising CFT---}We first examine correlation and causation in the critical quantum Ising chain with free boundaries, \begin{align}
    H = - \sum_{n=1}^{L-1} Z_n Z_{n+1} - \sum_{n=1}^L X_n.
\end{align}This Hamiltonian has two symmetries: the $\mathbb Z_2$ Ising symmetry $P = \prod_n X_n$, and the antiunitary $\mathbb Z_2^T$ symmetry $T = K$, given by complex conjugation in the $Z$-basis. The $\mathbb Z_2$-odd, $T$-even operator $Z_n$ flows to the spin field $\sigma(x) + \cdots$, where the ellipsis denotes subleading fields, while the $\mathbb Z_2$-odd, $T$-odd operator $Y_n \propto i [H, Z_n] = \partial_t Z_n$ flows to the term-by-term time derivative of this expansion, $\partial_t \sigma(x) + \cdots$ \cite{Zou2020}. Correlation therefore decays as $\braket{Y_n Y_m} \sim 1/|n-m|^{2 \cdot (\Delta_\sigma+1)}$, with $\Delta_\sigma = 1/8$ in the bulk and $\Delta_\sigma = 1/2$ at the boundary \cite{Cardy1986}. Causation, by contrast, \textit{vanishes}: $ Y $ is an exact time derivative on the lattice, so $\braket{Y_n G Y_m} = 0$ by Eq.~(\ref{eq:time_descendant}).

A more generic operator with the same charges is no longer an exact time derivative and acquires subleading contributions. To predict these, we must identify the first $T$-odd field in its expansion that is a quasiprimary. For simplicity, we work at the boundary, where $\mathbb Z_2$-odd operators flow to the $h=\frac{1}{2}$ tower. Since $T$ charge alternates between descendant levels \footnote{This follows from $T L_{-n} T^{-1} = (-1)^n L_{-n}$ on the boundary. See End Matter for more details.}, a $T$-odd operator flows entirely to odd-level descendants of this tower. The corresponding Virasoro character is \cite{DiFrancesco1997} \begin{align}  \label{eq:ising_odd}
    \chi_{1/2} &\propto 1 + q + q^2 + q^3 + 2q^4 + 2q^5 + 3q^6 \nonumber\\
    &\quad  + 4q^7 + 5q^8 + 6q^9 + \cdots 
\end{align}where the coefficient of $q^n$ counts the number of descendant fields at level $n$. Since $\partial_t$ acts injectively from level $n-1$ to $n$ \footnote{Injectivity follows from unitarity. For a state of weight $h$, $\lVert L_{-1}\ket{\psi}\rVert^2 = 2h \lVert\ket{\psi}\rVert^2 + \lVert L_1\ket{\psi}\rVert^2$, where unitarity guarantees $h > 0$ and norms are non-negative for all states above the vacuum, so the right-hand side is strictly positive; the identity ($h = 0$) is the only exception, which is annihilated by $L_{-1}$.}, quasiprimaries exist at level $n$ if the preceding level has fewer fields, i.e., when the coefficient of $q^n$ is larger than that of $q^{n-1}$. Reading off the coefficients, quasiprimaries occur at levels $0, 4, 6, 7, \ldots$, with the first odd level at $n = 7$. Correlation is thus dominated by the lowest $T$-odd descendant at level 1 ($h = \frac{3}{2}$), while causation is dominated by the level-7 quasiprimary ($h = \frac{15}{2}$): for a boundary operator $O_1$ with these charges and its mirror image $O_L$ at the opposite end, correlation decays as $\braket{O_1  O_L} \sim L^{-\frac{3}{2} \cdot 2} = L^{-3}$, whereas causation decays as $\braket{O_1 \, G \, O_L} \sim L^{-\frac{15}{2} \cdot 2 + 1} = L^{-14}$! Although this exponent was previously obtained for gSPT edge modes via a similar counting argument \cite{Verresen2021}, here we identify it as a universal property of causation. We confirm both scalings for the choice $O_1 = Y_1 X_2$ in Fig.~\ref{fig:figure1}a, where $\braket{O_1 \; G \; O_L}$ is computed by evaluating the resolvent in the exact free-fermion basis after a Jordan-Wigner transformation (see End Matter).

The same reasoning applies to $\mathbb Z_2$-even, $T$-odd operators, which flow to odd-level descendants of the $h = 0$ tower. The Virasoro character is \cite{DiFrancesco1997} \begin{align}\label{eq:ising_char}
    \chi_0 &\propto 1 + q^2 + q^3 + 2q^4 + 2q^5 + 3q^6 + 3q^7 + 5q^8 \nonumber\\
    &\quad + 5q^9 + 7q^{10} + 8q^{11} + 11q^{12} + 12q^{13} + \cdots
\end{align}Since $\chi_0$ has no $q^1$ term, the first odd-level descendant occurs at level 3 ($h = 3$), while the first odd-level quasiprimary occurs at level 11 ($h = 11$). Correlation therefore decays as $L^{-6}$ and causation as $L^{-21}$, a suppression of fifteen orders of magnitude. We numerically confirm both scalings for the choice $O_1 = Y_1 Z_2 X_3$ in Fig.~\ref{fig:figure1}b, again by computing the resolvent in the free-fermion basis.

\textit{Example: Tricritical Ising CFT---}The tricritical Ising CFT is realized in the spin-1 Blume-Capel chain, \begin{align}
    H = \sum_{n = 1}^L \left[\alpha (S_n^z)^2 + \beta S_n^x\right] - \sum_{n = 1}^{L-1} S_n^z S_{n+1}^z,
\end{align}tuned to its tricritical point, $\alpha = 0.910207$ and $\beta = 0.415685$ \cite{vonGehlen1994}. Like the Ising chain, this model has a $\mathbb Z_2$ spin-flip symmetry, $\prod_n e^{i \pi S_n^x}$, and the antiunitary symmetry $T = K$. At the boundary, $\mathbb Z_2$-even operators flow to the identity tower, whose Virasoro character is $\chi_0 \propto 1 + q^{2} + q^{3} + 2q^{4} + 2q^{5} + 4q^{6} + 4q^{7} + 7q^{8} + 8q^{9} + \ldots$ \cite{DiFrancesco1997}. The first odd-level descendant occurs at level 3 ($h = 3$), and the first odd-level quasiprimary at level 9 ($h = 9$): correlation between $\mathbb Z_2$-even, $T$-odd boundary operators therefore decays as $L^{-6}$, while causation decays as $L^{-17}$. We confirm both scalings for the choice $O_1 = S^y_1 S^z_2 S^x_3$ in Fig.~\ref{fig:figure1}c, where the causation function is obtained by perturbing one end of the chain with $\lambda \, O_1$ ($\lambda = 0.1$) and measuring the response of its mirror image, $\braket{S^x_{L-2} S^z_{L-1} S^y_L}$, at the other end.

\textit{Example: Free-Fermion CFT---}We now consider non-interacting Majorana chains in the BDI class \cite{AltlandZirnbauer1997, Verresen2018},
\begin{align}\label{eq:bdi}
    H = \sum_a t_a H_a, \qquad H_a := \frac{i}{2} \sum_n \tilde{\gamma}_n \gamma_{n+a},
\end{align}where $\gamma_n$ and $\tilde\gamma_n$ are the Majorana operators at site $n$, on which $T = K$ acts as $\gamma_n \to \gamma_n$ and $\tilde\gamma_n \to -\tilde\gamma_n$. This family generalizes the trivial ($H_0$) and Kitaev ($H_1$) chains \cite{Kitaev2001}. At criticality, correlations again decay algebraically; certain causation, however, decays \textit{exponentially}.

We illustrate this at the critical point $t_0 = t_1 = 1$. Here, the lattice flows to the free Majorana CFT, $H  = i\int \chi(x) \partial_x \tilde{\chi}(x) \; dx$, where $\chi$ and $\tilde \chi$ are the continuum limits of $\gamma$ and $\tilde \gamma$ and inherit their $T$-parities. In particular, $\chi$ is $T$-even. Since the left edge of the lattice model terminates on a real Majorana, the corresponding boundary condition is $\tilde \chi(t, 0) = 0$ \cite{Verresen2020}, so all $\partial_t^n \tilde \chi(t, 0)$ vanish. Because lattice fermion correlators Wick-contract, every field in the continuum expansion of a lattice Majorana must be a single-fermion operator. On the $(0+1)$D boundary, these are exhausted by the boundary fermion $\chi(0)$ and its time derivatives, $\partial_t^k \chi(0)$. Note that this implies that a $T$-odd lattice operator flows entirely to odd time derivatives of $\chi$, and hence causation vanishes by Eq.~(\ref{eq:time_descendant}). Adding an irrelevant perturbation such as $\xi \chi(x) \partial_x^2 \tilde \chi(x) $ introduces a length scale but algebraic order still cancels and causation decays exponentially, $\sim e^{-L/\xi}$.

In fact, one can show this holds already at the lattice level. Perturb $(t_0 , t_1) = (1,1)$ with any couplings $\{ t_\alpha \}$ that introduce no imaginary edge modes at the left end. Then any $T$-odd operator is a time derivative of an exponentially localized---in some cases strictly local---lattice operator (End Matter), so causation is exponentially suppressed by Eq.~(\ref{eq:time_descendant}). We confirm this in Fig.~\ref{fig:figure1}d for the critical chain $(t_{-1}, t_0, t_1) = (1, 3, 2)$ by computing the resolvent in the free-fermion basis, where the correlation $\braket{\tilde\gamma_1 \, \gamma_L}$ decays algebraically while the causation $\braket{\tilde\gamma_1 \, G \, \gamma_L}$ decays exponentially. This and the preceding field-theory claims are derived in detail in the End Matter.

\textit{Application: Gapless SPTs---}Causation quantifies finite-size edge-mode splittings in gapless SPTs. To see this, consider a fine-tuned gSPT whose edge modes are decoupled from the critical bulk, and introduce bulk-edge couplings as a perturbation,
\begin{align}\label{eq:edge_coupling}
    H = H_\text{bulk} + \lambda \left( \eta^\text{left}\, O^\text{left} + O^\text{right}\, \eta^\text{right} \right),
\end{align}where $\eta^\text{left,right}$ act on the edge modes and $O^\text{left,right}$ are the coupled bulk terms. Treating the bulk-edge couplings as a perturbation, we obtain the edge-mode splitting at second order \footnote{We assume symmetry forbids $O(\lambda)$ contributions. This can fail, for instance, when the two ends of the chain carry distinct boundary conditions \cite{Prembabu2024}. },
\begin{align}\label{eq:edge_splitting}
    \Delta E \propto \lambda^2 \braket{O^\text{left} \, G \, O^\text{right}},
\end{align}which is precisely a causation function of the bulk. The localization of edge modes in gapless SPTs is therefore determined by the system's capacity to mediate \textit{causal} influence between its boundaries.

We first illustrate this for free fermions, where it was found that the chains~(\ref{eq:bdi}) support exponentially localized edge modes even when gapless~\cite{Verresen2018}, despite their algebraic correlations. We propose a simple explanation of this behavior: the edge-mode splitting is quantified by causation rather than correlation and, by the free-fermion argument above, causation decays exponentially. 

As a concrete example, take the Majorana chain $H = t_1 H_1 + t_2 H_2$ (\ref{eq:bdi}), which hosts exact edge modes $\gamma_1$ and $\tilde \gamma_L$ decoupled from the bulk \cite{Verresen2018}. Now perturb with $t_0 H_0$, which couples the edges. The new terms realize Eq.~\eqref{eq:edge_coupling} with $\eta^\text{left} = \gamma_1$, $\eta^\text{right} = \tilde\gamma_L$, and $O^\text{left} = \tilde\gamma_1$, $O^\text{right} = \gamma_L$, so the edge mode splitting \eqref{eq:edge_splitting} is the bulk causation function $\braket{\tilde\gamma_1 \, G \, \gamma_L}$. This causation decays exponentially in the chain length, $\braket{\tilde\gamma_1 \, G \, \gamma_L} \sim e^{-L/\xi}$, where the length scale $\xi$ is determined by the zeros of $f(z) = t_0 + t_1 z + t_2 z^2$ inside the unit disk. For instance, the family $(t_0, t_1, t_2) = (t_0, 1 + t_0, 1)$ realizes the Majorana CFT \cite{Verresen2018}, where a single inner zero, $z = - t_0$, sets the decay and the edge-mode localization is $\xi = - 1/\ln|t_0|$.

The same argument extends to interacting spin chains, where the edge-mode splitting is more involved but is still quantified by causation. As a novel example, consider the gapless SPT
\begin{align}\label{eq:H}
H &= - \sum_n \left(Z_{n-1}X_{n} Z_{n+1} + Z_{n-1}X_nX_{n+1}Z_{n+2}
    \right),
\end{align}which is an Ising critical point between two phases protected by the $\mathbb Z_2 \times \mathbb Z_2^T$ symmetry generated by $P = \prod_n X_n$ and $T=K$ \cite{Verresen2017}: the first term stabilizes the cluster SPT phase \cite{Suzuki1971, Raussendorf2001, KeatingMezzadri2004, KoppChakravarty2005, Son2011}, whereas the latter term would stabilize a spontaneous breaking down to $PT$, which protects a remaining SPT phase; in both nearby phases, the edge mode is a Kramers pair under $PT$. With free boundaries, a unitary $U$ maps $H$ to the bulk Ising CFT with a decoupled qubit at each end, i.e., $UHU^\dag = -\sum_{n=2}^{L-2} Z_n Z_{n+1} - \sum_{n=2}^{L-1} X_n$; the symmetries become $P \to Z_1 Z_L \tilde P$ and $T \to X_1 X_L \tilde P T$, where $\tilde P = \prod_{n=2}^{L-1} X_n$ is the bulk Ising symmetry. The two edge qubits leave the ground state fourfold degenerate.

To lift this degeneracy, we add generic symmetry-allowed bulk-edge couplings. This realizes Eq.~(\ref{eq:edge_coupling}) with two coupling operators per edge, $\eta_1 O_1$ and $\eta_2 O_2$, where $O_1$ and $O_2$ are both $\mathbb Z_2$-odd and $T$-odd. Thus, their causal contributions flow to odd-level quasiprimaries of the $h=\frac{1}{2}$ tower, the first two of which occur at levels 7 and 9 [Eq.~(\ref{eq:ising_odd})]. The energy splitting in the fourfold degenerate space is given by Eq.~(\ref{eq:edge_splitting}), and splits at two scales: one by $L^{-14}$ from causation of the level-7 quasiprimary, and another smaller one by $L^{-18}$ from the level-9 quasiprimary. Both are confirmed with numerics in Fig.~\ref{fig:figure2}a, where we compute the splittings in the free-fermion basis by evaluating the resolvent.

A second $\mathbb Z_2 \times \mathbb Z_2^T$-protected gapless SPT is given by \begin{multline} \label{eq:H'}
    H' = - \sum_n \big( Z_{n-2}X_{n-1}X_nX_{n+1}Z_{n+2} \\
    + Z_{n-1}X_nX_{n+1}Z_{n+2} \big).
\end{multline}
This model is also an Ising critical point, but the nearby cluster SPT phase has been replaced by a phase where the edge mode is a Kramers pair under \textit{both} $T$ and $PT$ \cite{Verresen2017}. It can be analyzed similarly (End Matter). Here, the protecting symmetry allows $O_1$ and $O_2$ to be $\mathbb Z_2$-even and $T$-odd, flowing to the $h = 0$ tower with first quasiprimaries at levels 11 and 13 [Eq.~(\ref{eq:ising_char})]. Causation predicts energy splittings of $L^{-21}$ and $L^{-25}$! We numerically confirm both splittings in the free-fermion basis, Fig.~\ref{fig:figure2}b.

\begin{figure}[t]
  \centering
  \includegraphics[width=0.9\columnwidth]{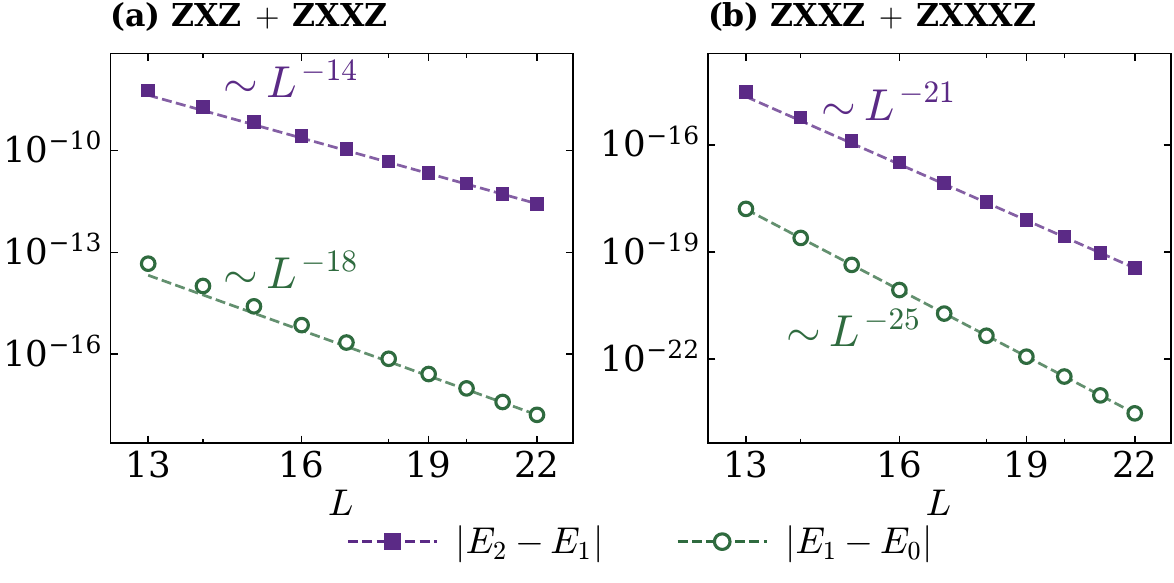}
  \vspace{-1.0em}
\caption{\textbf{Finite-size edge-mode splittings in gapless SPTs.} Starting with fine-tuned gSPTs $H$ and $H'$ with decoupled qubit edge modes, we couple to $O_{1,2}^{\text{left}}$ and $O_{1,2}^{\text{right}}$ of the main text. (a) $H$, with $O_1^\text{left} = Y_2 Y_3 X_4 Y_5$ and $O_2^\text{left} = Y_2 X_3 Y_4 Y_5$ [Eq.~(\ref{eq:H})]; splittings $\sim L^{-14}, L^{-18}$. (b) $H'$, with $O_1^\text{left} = -X_2 Y_3 X_4 Z_5$ and $O_2^\text{left} = Y_2 Y_3 Y_4 Z_5$ [Eq.~(\ref{eq:H'})]; splittings $\sim L^{-21}, L^{-25}$.}
  \label{fig:figure2}
\end{figure}

\textit{Application: Primary sieve in $d \ge 3$ CFTs---}In three- or higher-dimensional CFTs, the conformal algebra is finite-dimensional and descendants are generated by the momentum $P_\mu = i \partial_\mu$, so descendants of a primary $\phi$ have the form $(\partial_{\mu_1})^{k_1} \cdots (\partial_{\mu_i})^{k_i} \phi$ \cite{DiFrancesco1997}. Every descendant of a $(0+1)$D defect or boundary operator is therefore a time derivative and drops out of causation by Eq.~(\ref{eq:time_descendant}). Thus, while the leading correlation of a lattice operator is set by the lowest-dimension field in its continuum expansion, often a descendant, causation is always set by the lowest-dimension primary. In this sense, causation acts as a sieve for primary operators.

In the bulk, or on defects and boundaries of one or more spatial dimensions, spatial derivatives survive but do not spoil the sieve: a lattice operator of indefinite spatial parity \footnote{i.e., has components of both spatial parities; for example, $Z_1 X_2 \propto (Z_1 X_2 + X_1 Z_2) + (Z_1 X_2 - X_1 Z_2)$ decomposes into even- and odd-parity components.} flows to fields of every spatial parity, so no parity rule forbids its primary $\phi$ whenever a spatial descendant $(\partial_x)^n \cdots (\partial_y)^m \phi$ contributes. Since $\phi$ has strictly lower dimension, it dominates.

We demonstrate this in the $(2+1)$D critical transverse-field Ising model, computing corner-to-corner observables on an $L \times L$ lattice with DMRG \cite{White1992, White1993, Hauschild2018, tenpy2024} (Fig.~\ref{fig:combined}a). Convergence was reached for all system sizes shown with bond dimension $\chi = 300$. At a corner, the operators $Z \sim \phi_c$ and $Y \sim \partial_t \phi_c$ are the lowest-dimension $\mathbb Z_2$-odd corner primary and its first descendant. Their correlations give $\Delta_Z \approx 1.8$ and $\Delta_Y \approx 2.8$, in reasonable agreement with the corresponding classical hinge magnetization exponent $\beta_2/\nu \approx 2.03$ computed by Monte Carlo \cite{Pleimling1998}, with the difference presumably attributable to finite-size effects. Correlation resolves only these light operators, since any correlator of $\mathbb Z_2$-odd, $T$-odd operators is dominated by the descendant $\partial_t \phi_c$.

\begin{figure}[t]
  \centering
  \raisebox{-1.8pt}{\includegraphics[trim={1cm 1cm 1cm 1cm}, clip=false,
    width=\columnwidth]{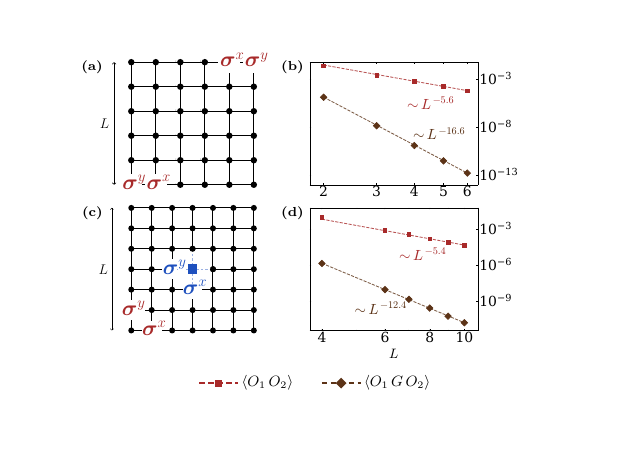}}
\caption{\textbf{Correlation and causation in the $(2+1)$D Ising CFT.}
(a) $L\times L$ Ising lattice with operators $O_1 := \sigma^y_{1,1}\sigma^x_{1,2}$ (bottom left) and $O_2 := \sigma^x_{L,L-1}\sigma^y_{L,L}$ (top right). (b) Decay of $\langle O_1 O_2\rangle$ and $\langle O_1 G O_2\rangle$ with $L$. (c) Lattice with a magnetic line defect (blue square) and $O_2$ at its center, shown for odd $L$; even $L$ differs slightly as it has no center site. Operator definitions are unchanged up to repositioning. (d) Correlation and causation with the defect, where odd-$L$ data is rescaled by a constant factor.}
  \label{fig:combined}
\end{figure}

Causation instead filters out this descendant. Every time derivative drops out, leaving the lowest-dimension $T$-odd corner primary as the leading term. To extract this primary, we perturb the bottom-left corner by $Y_{(1,1)} X_{(1,2)}$, a generic operator carrying the same charges as $Y$, and measure the response $\braket{X_{(L,L-1)} Y_{(L,L)}}$ at the top-right corner with DMRG (Fig.~\ref{fig:combined}a). The causation decays as $L^{-16.6}$ (Fig.~\ref{fig:combined}b), identifying a new $T$-odd, $\mathbb Z_2$-odd corner primary $\phi_c^T$ with $\Delta_{\phi_c^T} \approx 8.8$. This is consistent with mean-field theory, where the lowest-dimension $T$-odd primary is a composite of three $\phi_c$'s and three time derivatives of dimension $3 \cdot [\phi_c] + 3 \cdot [\partial_t] \approx 8.4$. Note that this primary is invisible to correlation: its $L^{-17.6}$ contribution is masked by the $L^{-5.6}$ decay of $\partial_t \phi_c$.

The same filtering mechanism also applies in the presence of conformal defects. Consider a modification of the $L \times L$ lattice, with the center spin pinned downwards to realize a $(0+1)$D magnetic line defect (Fig.~\ref{fig:combined}c), which breaks $\mathbb Z_2$ but preserves $T$ \cite{Allais2014, ParisenToldin2017, Cuomo2022, Hu2024}. The connected correlation of $Z$ between the corner and the defect decays as $L^{-3.4}$, consistent with the corner primary $\Delta_{\phi_c} \approx 1.8$ plus the leading defect primary $\phi_d$ of dimension $\Delta_{\phi_d} \approx 1.6$ \cite{Hu2024}. For the $T$-odd operator $YX$, correlation resolves only the first descendants of these fields, $\partial_t \phi_c$ and $\partial_t \phi_d$, decaying as $L^{-5.4}$ with $5.4 = (1.8 + 1) + (1.6 + 1)$. Causation, computed as before with DMRG (up to $\chi = 1800$), filters out these descendants, decaying as $L^{-12.4}$ (Fig.~\ref{fig:combined}d). Subtracting the corner primary dimension identifies a $T$-odd defect primary $\phi_d^T$ with $\Delta_{\phi_d^T} \approx 13.4 - 8.8 = 4.6$, in excellent agreement with the heaviest magnetic line defect primary found with the fuzzy sphere method, $\Delta = 4.64(14)$ \cite{Hu2024}.

\textit{Discussion---}We have shown that correlation and causation can exhibit strikingly different scaling at criticality. Since time-derivative fields make no contribution, the causation of a generic lattice operator is governed by the lowest-dimension quasiprimary in its expansion, and in $d\ge3$, the lowest primary. This implies causation is a good probe of primary operator dimensions and explains the sharp localization of edge modes in gapless SPTs. 

The former indicates that causation is a practical tool for higher-dimensional CFTs, where much of the operator spectrum is unknown. By computing causation in different symmetry sectors, one can isolate the corresponding lowest primary, as demonstrated here for $\phi_c^T$ and $\phi_d^T$. The same approach extends to other defects, boundaries, and critical models. In each case, causation requires only a ground state calculation in the presence of a perturbation, so it is easily computed with exact diagonalization, DMRG, quantum Monte Carlo, or any numerical method that is already suited to compute correlation. A systematic study of causation across symmetry sectors of $d \ge 3$ lattice CFTs may uncover additional heavy bulk, defect, and boundary primary operators, and we leave this to future work.

Our results also raise several new questions. We have defined causation as the static ($\omega \to 0$) limit of the Kubo response formula; at finite frequency, descendant fields may again contribute, and to what extent the suppression persists would be worthwhile to investigate. In addition, the suppression mechanism identified here relies on $\mathbb Z_2^T$ time-reversal symmetry to isolate time-derivative fields. It will be interesting to determine whether other symmetries play a similar role in suppressing causation.

\textit{Acknowledgments---}The authors thank Aashish Clerk and Curt von Keyserlingk for insightful discussions, and in particular Saranesh Prembabu for suggesting a defect-based example. C.W. was supported in part by a Selove Summer Research Scholarship from the University of Chicago's Department of Physics. DMRG calculations were performed using the TeNPy library (version 1.1.0) \cite{tenpy2024}.

\bibliography{draft}

\onecolumngrid
\begin{center}
\textbf{End Matter}
\end{center}
\vspace{2pt}
\twocolumngrid

\setcounter{equation}{0}
\renewcommand{\theequation}{A\arabic{equation}}
\makeatletter
\renewcommand{\theHequation}{A\arabic{equation}}
\makeatother

\textit{Free Fermions---}As mentioned in the main text, lattice fermions flow to single fermionic operators in the low-energy theory. To see this, suppose we have a lattice model of free fermions which flows to a scale-invariant field theory. The lattice Majorana fermions $c(x)$ can be expressed as a sum of continuum scaling operators $c(x) \sim \sum_i a_i O_i(x)$ where each $O_i$ is a (possibly composite) fermionic operator of dimension $\lambda_i$. Here, $\sim$ indicates that all correlation functions $\langle c(x_1) \cdots c(x_n)\rangle$ are asymptotic to the corresponding continuum correlation functions as all $|x_i-x_j|$ become large. Since the lattice fermions satisfy Wick contraction, so too must the continuum correlation functions order-by-order in the scaling dimension. In particular, if we take all $\lambda_i$ to be distinct, each $O_i(x)$ satisfies Wick expansion with itself and all other $O_j(y)$. However, since the Majorana CFT is itself free, correlations of composite $O_i$'s may also be computed by Wick-contracting their constituent fermions, which yields partial contractions between composites which are absent in the former expansion (unless each $O_i$ is a single fermion). Consistency therefore requires any boundary $O_i$ to be of the form $\partial_t^n \chi$ or $\partial_t^n \tilde \chi$, which exhausts all single-fermion boundary operators. Imposing $\tilde \chi(t,0) = 0$ as in the main text, the only non-vanishing $T$-odd operators are of the form $\partial_t^k \chi(t,0)$ for odd $k$ and hence exhibit vanishing causation.

Let us introduce an irrelevant perturbation, $\xi \, \chi(x) \, \partial_x^2 \tilde \chi(x)$, so that the Hamiltonian on a chain of length $L$ becomes $H = i \int_0^L  \chi(x) \left( \partial_x - \xi \partial_x^2 \right) \tilde{\chi}(x) \, dx$. In the unperturbed theory, the leading $T$-odd boundary operator, $\partial_x \tilde \chi(0) \propto \partial_t \chi(0)$, is the time derivative of a strictly local fermion and therefore has vanishing causation. The perturbation changes this conclusion because $\partial_x\tilde{\chi}(0)$ is no longer the time derivative of a strictly local operator, but rather an exponentially localized one. To see this, write \begin{align}
    \partial_x \tilde \chi(0) = i [H, \chi'] + e^{-L/\xi} \partial_x \tilde \chi(L),
\end{align}where $\chi'=\frac{1}{2\xi}\int_0^L e^{-x/\xi}\chi(x) \,dx.$ For the first term, Eq.~\eqref{eq:time_descendant} reduces the causation to the correlation overlap $i \braket{[H - E_0, \chi'] \; G \; \partial_x \chi(L)} = i\braket{\chi'\,\partial_x\chi(L)}$, which picks up only the tail $e^{-L/\xi}$ of $\chi'$ at the far edge. The second term is suppressed at the same rate by its prefactor, $e^{-L/\xi}$. It follows that the end-to-end causation $\braket{\partial_x \tilde \chi(0) \; G \; \partial_x \chi(L)}$ decays as $e^{-L/\xi}$ with decay length $\xi$ set by the strength of the irrelevant perturbation.

As we discuss in the main text, this result holds more generally on the lattice.  Take a chain of length $L$ in the BDI class (\ref{eq:bdi}). Its associated Laurent polynomial is defined as $f(z) := \sum_a t_a z^a$. Let $N_z$ denote the number of zeros of $f$ inside the unit circle, $N_p$ its pole order at the origin, and $\omega := N_z - N_p$ the `winding number' \cite{Verresen2018}.

\begin{theorem}
    Consider any BDI Majorana chain with $\omega \ge 0$. As $L \to \infty$,
    \begin{enumerate}
        \item If $N_p > 0$, then $\braket{\tilde\gamma_1 \, G \, \gamma_L} \lesssim e^{-L/\xi}$.
        \item Otherwise, $\braket{\tilde \gamma_1 \; G \; \gamma_L} = 0$.
    \end{enumerate}Here, $\xi = -1/\ln|z_*|$, with $z_*$ the largest zero of $f$ inside the unit circle.
\end{theorem}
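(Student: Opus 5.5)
The plan is to show that $\tilde\gamma_1$ is, up to an exponentially small remainder, a time derivative of a linear combination of the $\gamma_m$, and then to invoke Eq.~(\ref{eq:time_descendant}). For fermionic $C$ and $B$ the final commutator there becomes an anticommutator $\{C,B\}$, which is a c-number.

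\textbf{Step 1: the action of $H$ on a linear combination of $\gamma$'s.} From $\{\gamma_n,\gamma_m\}=2\delta_{nm}$ and Eq.~(\ref{eq:bdi}) one finds $i[H,\gamma_m] = -\sum_a t_a \tilde\gamma_{m-a}$, keeping only sites inside $[1,L]$. Hence, for $C=\sum_{m=1}^L c_m\gamma_m$,
\begin{equation*}
i[H,C] = -\sum_{n=1}^L \tilde\gamma_n \sum_a t_a\, c_{n+a},
\end{equation*}
with the convention $c_m=0$ for $m\notin[1,L]$. We want to choose $c$ so that the inner sum equals $-\delta_{n,1}$ for every $n$.

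\textbf{Step 2: solve the recurrence.} The homogeneous recurrence $\sum_a t_a c_{n+a}=0$ has solutions $c_m=z^m$ exactly when $f(z)=0$. Repeated zeros contribute $m^k z^m$, which only changes polynomial prefactors.

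\emph{Case $N_p=0$.} All $a\ge 0$, and generically $t_0\neq 0$. Then $c_m=-\delta_{m,1}/t_0$ solves the system exactly: the equation at $n=1$ reads $t_0c_1=-1$, and every other equation involves only $c_{m\ge 2}=0$. So $\tilde\gamma_1=-t_0^{-1}\, i[H,\gamma_1]$ exactly, and the causation equals $\propto\braket{\{\gamma_1,\gamma_L\}}=0$ for $L>1$. This proves item 2. The subcase $t_0=0$ would make $\tilde\gamma_1$ an exact zero mode, contradicting the standing non-degeneracy assumption, so it can be set aside.

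\emph{Case $N_p>0$.} Here negative $a$ appear, so the equations at $n=1,\dots,N_p$ see the truncation at the left edge. I would take the ansatz $c_m=\sum_j \alpha_j z_j^m$ for $m\ge 1$, using only the $N_z$ zeros inside the unit disk. This solves the recurrence in the bulk automatically. The equations at $n=1,\dots,N_p$ then impose $N_p$ linear conditions on the $\alpha_j$: the would-be values of the extension at $m=1-N_p,\dots,0$ must reproduce the $\delta_{n,1}$ source. Since $\omega=N_z-N_p\ge0$, there are at least as many unknowns as conditions. The coefficient matrix is of Vandermonde type in the distinct $z_j$ (confluent Vandermonde if zeros repeat), so a solution exists.

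\textbf{Step 3: the right-edge remainder.} Truncating $c_m$ at $m=L$ breaks the equations at $n$ within $\max_a a$ of $L$. This gives $\tilde\gamma_1=-i[H,C]+R$, where $R$ is a combination of $\tilde\gamma_n$ near the right end with coefficients $O(|z_*|^L)$. By Eq.~(\ref{eq:time_descendant}), the first term contributes $\propto\braket{\{C,\gamma_L\}}=2c_L=O(|z_*|^L)$. The second contributes at most $\|R\|\,\|G\|$. The gap of a gapless chain closes only polynomially, so $\|G\|=\mathrm{poly}(L)$, and this term is also $\lesssim e^{-L/\xi}$ with $\xi=-1/\ln|z_*|$, up to polynomial factors absorbed in ``$\lesssim$''. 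This proves item 1.

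\textbf{Main obstacle.} I expect the hardest step to be Step 2 for $N_p>0$: proving that the $N_p\times N_z$ boundary system is always solvable. This requires handling repeated zeros and zeros on the unit circle, which must be excluded from the ansatz because they do not decay. I would first treat distinct interior zeros, where rank $N_p$ follows from the Vandermonde structure, and then pass to the confluent case by a limiting argument.
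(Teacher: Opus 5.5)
Your route is the paper's: write $\tilde\gamma_1=i[H,C]$ with $C$ a combination of $\gamma$'s, build $C$ from the zeros inside the unit disk subject to $N_p$ Vandermonde boundary conditions, truncate at the right end, and bound the remainder by $\|R\|\,\|G\|$. Two steps fail as written.

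First, you discard the subcase $N_p=0$, $t_0=0$ on a false premise. There, $\tilde\gamma_1$ is not a zero mode: it appears in $H$ through $t_{a_0}\tilde\gamma_1\gamma_{1+a_0}$, where $a_0\ge1$ is the lowest power in $f$. The exact zero modes are instead $\gamma_1,\dots,\gamma_{a_0}$ and $\tilde\gamma_{L-a_0+1},\dots,\tilde\gamma_L$. The theorem is meant to cover this degenerate case. That is why the paper adopts the convention that $G$ projects out the ground-state subspace, and it is exactly the decoupled-edge chain $t_1H_1+t_2H_2$ of the gSPT section. The paper's fix is immediate: $C=-\gamma_{a_0+1}/t_{a_0}$ solves your recurrence exactly. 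The projector term still vanishes, because $\gamma_L$ has no weight on the zero modes, so $P_{gs}\gamma_L\ket{0}=0$. Relatedly, Eq.~(\ref{eq:time_descendant}) applied to the single function gives $\braket{\tilde\gamma_1 G\gamma_L}=i\braket{C\gamma_L}$, not an anticommutator. It collapses to $ic_L$ only because the BDI ground state has $\braket{\gamma_m\gamma_{m'}}=\delta_{mm'}$, and you should state that.

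Second, the claim $\|G\|=\mathrm{poly}(L)$ is false whenever $\omega>0$. The homogeneous version of your boundary system has $\omega$ independent solutions $\beta_m=\sum_s\mu_s z_s^m$, namely the left edge modes. These are near-null vectors of $(T_L)_{nm}=t_{m-n}$ with residual $O(|z_*|^L)$, so the chain has exponentially small single-particle energies and $\|R\|\,\|G\|$ need not decay. Concretely, take $f=z-(a+b)+ab\,z^{-1}$ with real $0<|b|<|a|<1$, so $N_p=1$ and $\omega=1$. Then $\beta_m=a^m-b^m$ satisfies $T_L\beta=-\beta_{L+1}e_L$, so $\|G\|\gtrsim|a|^{-L}$, while your remainder is generically of order $|a|^L$. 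The paper's proof asserts the same bound, so you inherited this weakness rather than introduced it, but the argument as it stands only establishes case 1 for $\omega=0$. The statement itself survives in this example. With a nondegenerate ground state the preimage exists exactly on the finite chain, $c=-T_L^{-1}e_1$, so the causation equals $-i(T_L^{-1})_{L1}$. The tridiagonal inverse formula then gives $|(T_L^{-1})_{L1}|=|ab|^{L-1}/|\det T_L|\sim|b|^L$. A complete proof has to control this corner entry directly rather than through $\|G\|$.
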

Here we use the convention that $G$ projects out the ground state subspace.
The proof follows from writing $\tilde\gamma_1$ as the time derivative of an exponentially localized operator in case one, and of a strictly local operator in case two. Exponential decay and vanishing then follow from Eq.~\eqref{eq:time_descendant}.

\emph{Proof.}
We seek an operator $\mathcal{O} = \sum_{a\ge1}\alpha_a\gamma_a$ with $i[H,\mathcal{O}]=\tilde\gamma_1$ on the half-infinite chain $H = \frac{i}{2} \sum_{a} t_a \sum_{n,\,n+a \ge 1}  \tilde  \gamma_n \gamma_{n+a}$. From Eq.~\eqref{eq:bdi},
$i[H,\gamma_a] = -\sum_b t_b\,\tilde\gamma_{a-b}$, so the requirement is
\begin{equation}
    \sum_{a\ge1} t_{a-n}\,\alpha_a = -\delta_{n,1},
    \label{eq:em_cond}
\end{equation}for all $n\ge 1$. If $N_p=0$, $f$ is a polynomial with lowest power $a_0\ge0$, and Eq.~\eqref{eq:em_cond} is solved by the strictly local operator $\mathcal{O} = -\gamma_{a_0+1}/t_{a_0}$. Since $\mathcal O$ is strictly local, $\tilde\gamma_1 = i[H,\mathcal{O}]$ also holds on any finite chain with $L>a_0+1$.

If $N_p>0$, take instead
\begin{equation}
    \alpha_a = \sum_{s=1}^{N_z} \lambda_s\, z_s^{\,a},
    \label{eq:em_ansatz}
\end{equation}
with $z_s$ the zeros of $f$ inside the unit circle (for simplicity of notation we consider the generic case of non-degenerate roots). Since
$\sum_{a\ge1} t_{a-n} z^a = z^n f(z) - \mathds{1}_{n \le N_p} \sum_{k=n}^{N_p} t_{-k} z^{\,n-k}$ and
$f(z_s)=0$, Eq.~\eqref{eq:em_cond} trivially holds for $n>N_p$ and, for $n \le N_p$, reduces to the $N_p$ conditions $\sum_s M_{ns}\lambda_s = \delta_{n,1}$ with $M_{ns} = \sum_{k=n}^{N_p} t_{-k} z_s^{\,n-k}$. This matrix row reduces to a Vandermonde matrix in $z_s^{-1}$, so a solution exists precisely when $N_z \ge N_p$, i.e.\ $\omega\ge0$. The coefficients are bounded by $|\alpha_a| \le \Lambda e^{-a/\xi}$ with $\Lambda=\sum_s|\lambda_s|$. Truncating to a  finite chain of length $L$, $\mathcal{O} = \sum_{a=1}^{L}\alpha_a\gamma_a$, and the commutator becomes $i[H,\mathcal{O}] = \tilde\gamma_1 + E$, which includes the remainder term
\begin{equation}
    E = \!\!\sum_{j>L-q}\!\Big(\sum_{a>L} t_{a-j}\,\alpha_a\Big)\tilde\gamma_j, \label{eq:em_error}
\end{equation}
where $q$ is the highest power of $f(z)$. Note that $\|E\| \le c\, e^{-L/\xi}$ for a constant $c$ set by the Hamiltonian couplings.

For $N_p>0$, Eq.~\eqref{eq:time_descendant} applies with $C = \mathcal{O}$.
The projector term vanishes by fermion parity, and anti-hermiticity forces $\braket{\gamma_a\gamma_L} = \delta_{aL}$, so only the overlap
of $\mathcal{O}$ with the far edge survives,
\begin{equation}
    \braket{\tilde\gamma_1\, G\, \gamma_L}
    = i\braket{\mathcal{O}\gamma_L} - \braket{E\, G\, \gamma_L}
    = i\,\alpha_L - \braket{E\, G\, \gamma_L}.
    \label{eq:em_reduction}
\end{equation}
We can bound $|\braket{E\,G\,\gamma_L}|
\le \|E\|\,\|G\|$ with $\|G\|$ the inverse energy gap, polynomial in $L$ at criticality. Since $\|E \|$ and $|\alpha_L |$ are both bounded by $e^{-L/\xi}$, causation decays exponentially up to polynomial prefactors, giving case one. For $N_p=0$, Eq.~\eqref{eq:em_reduction} holds with $E=0$ and $\alpha_L=0$, which gives case two. \qed

This is consistent with our field-theory result. At $t_0 = t_1 = 1$ the chain realizes the Majorana CFT, and since $N_p = 0$, causation vanishes as predicted in the continuum. After turning on a perturbation such as $t_{-1}$, causation decays exponentially, even if we tune the other coefficients to stay at criticality.

\setcounter{equation}{0}
\renewcommand{\theequation}{B\arabic{equation}}
\makeatletter
\renewcommand{\theHequation}{B\arabic{equation}}
\makeatother

\textit{Interacting gapless SPTs---}Here we derive the ground-state splittings of the two gapless SPTs of the main text. With open boundaries, the first model, $H$, consists of all terms supported on a finite chain of length $L$,
\begin{align}
H = -\sum_{n=2}^{L-1} Z_{n-1}X_{n} Z_{n+1}
    - \sum_{n=2}^{L-2} Z_{n-1}X_nX_{n+1}Z_{n+2}.
\end{align}
A unitary $U = R_x\, U_{\rm CZ}$ relates this to the Ising chain, where $U_{\rm CZ} = \prod_{n=1}^{L-1}\mathrm{CZ}_{n,n+1}$ applies a controlled-Z gate to each bond and $R_x = e^{-i\frac{\pi}{4}\sum_n X_n}$ is a global $\pi/2$ rotation about the $x$-axis. This maps $H$ to a bulk critical Ising chain with sites $1$ and $L$ decoupled. The two free edge qubits leave the ground state fourfold degenerate. To lift the degeneracy, we perturb by the symmetry-allowed edge terms $\lambda(Z_1 Z_2 X_4 + X_1 Z_2 X_3 Z_5)$ and their reflections at the right edge, which in the transformed frame realize Eq.~(\ref{eq:edge_coupling}) with two coupling operators per edge,
\begin{align}\label{eq:pert_a}
    UHU^\dag + \lambda \big( Y_1 O_1^\text{left}
    + X_1 O_2^\text{left} + \cdots \big).
\end{align}Up to signs, $O_1^\text{left} = Y_2Y_3X_4Y_5$ and $O_2^\text{left} = Y_2X_3Y_4Y_5$, both $\mathbb Z_2$-odd and $T$-odd. At second order in $\lambda$, the ground-state manifold splits with eigenvalues
\begin{align}\label{eq:eigenvalues}
    E_{1,2,3,4} = \pm 2 \lambda^2 \sqrt{\lVert M \rVert_F^2 \pm 2 \det M},
\end{align} where $M_{ij} = \braket{O_i^\text{left}\, G\, O_j^\text{right}}$ is the $2\times2$ matrix of bulk causation functions and $\lVert M \rVert_F^2 = \sum_{ij} M_{ij}^2$. For
$|\det M| \ll \lVert M \rVert_F^2$, the manifold splits at two scales, $\lVert M \rVert_F$ and $\det M / \lVert M \rVert_F$.

Since $O_1$ and $O_2$ are $\mathbb Z_2$-odd and $T$-odd, their causal contributions flow to odd-level quasiprimaries of the $h = \tfrac12$ tower, the first two at levels 7 and 9 [Eq.~(\ref{eq:ising_odd})]. Every entry of $M$ is dominated by the level-7 quasiprimary, so the larger splitting is $\lVert M \rVert_F \sim L^{-14}$. Since level 7 contains a single quasiprimary, both operators reduce to the same field at this level, so the two products in $\det M = M_{11}M_{22} - M_{12}M_{21}$ are equal and cancel. The determinant is first nonzero at the order of the level-9 quasiprimary,
giving the smaller splitting $\det M / \lVert M \rVert_F \sim L^{-18}$. Both scalings are confirmed numerically in Fig.~\ref{fig:figure2}a.

For the second model with open boundaries, we add the edge terms $H_\text{edge} = Y_1Z_2Y_3Z_4 + Z_{L-3}Y_{L-2}Z_{L-1}Y_L$, which are symmetry-allowed and commute with all other terms, leaving one protected qubit per edge,
\begin{align}
    H' = &-\sum_{n=3}^{L-2} Z_{n-2}X_{n-1}X_nX_{n+1}Z_{n+2} \nonumber\\
    &-\sum_{n=2}^{L-2} Z_{n-1}X_nX_{n+1}Z_{n+2} - H_\text{edge}.
\end{align}As before, $ZXXZ$ alone stabilizes the phase spontaneously breaking the symmetry down to $PT$, which protects a remaining gapped SPT. The $ZXXXZ$ term stabilizes an SPT nontrivial under both $T$ and $PT$ individually. A unitary $U' = W\, U_{\rm CZ}\, V\, U_{\rm CZ}$ again maps $H'$ to the bulk Ising chain with sites $1$ and $L$ decoupled, where $V = \prod_n (Y_n + Z_n)/\sqrt{2}$ exchanges $Y$ and $Z$, and $W = e^{i\frac{\pi}{4}(Z_2 + Z_{L-1})}$ rotates sites $2$ and $L-1$ about the $z$-axis. Under $U'$, the protecting symmetries transform as $P \to \tilde P$ and $T \to Y_1 Y_L \tilde P T$. Perturbing by $\lambda(X_1 Z_3 Z_4 + Z_2 X_4 Z_5)$ and their reflections at the right edge gives
\begin{align}\label{eq:pert_b}
    U'H'U'^\dag + \lambda \big( Y_1 O_1^\text{left}
    + Z_1 O_2^\text{left} + \cdots \big),
\end{align}where, up to signs, $O_1^\text{left} = X_2Y_3X_4Z_5$ and $O_2^\text{left} = Y_2Y_3Y_4Z_5$, both $\mathbb Z_2$-even and $T$-odd. Their causal contributions flow to odd-level quasiprimaries of the $h = 0$ tower, the first two at levels 11 and 13 [Eq.~(\ref{eq:ising_char})], and Eq.~(\ref{eq:eigenvalues}) applies unchanged: $\lVert M \rVert_F \sim L^{-21}$ from level 11 and $\det M / \lVert M \rVert_F \sim L^{-25}$ from level 13, confirmed in Fig.~\ref{fig:figure2}b.

\setcounter{equation}{0}
\renewcommand{\theequation}{C\arabic{equation}}
\makeatletter
\renewcommand{\theHequation}{C\arabic{equation}}
\makeatother

\textit{Time reversal---}In the main text, we note that $T$-charge flips between successive descendant levels of a boundary tower in a $(1 + 1)$D system. Here we derive this, starting in the bulk. We work in Euclidean coordinates $z, \bar z = x \pm iy$, with $x$ time and $y$ space. Time reversal sends $x \to -x$, so it acts on the coordinates as $T z T = -\bar z$ and $T \partial_z T = -\partial_{\bar z}$. Its action on the geometric Virasoro generators $L_n = z^{1+n}\partial_z$ and $\bar L_n = \bar z^{1+n}\partial_{\bar z}$ is therefore
\begin{align}\label{eq:time_reversal}
    T L_n T = (-1)^n \bar L_n, \qquad T \bar L_n T = (-1)^n L_n.
\end{align}
Now, consider a boundary at $y = 0$ parallel to the time direction. Cardy's gluing condition, $T(z) = \bar T(\bar z)$ \cite{Cardy2004}, leaves a single Virasoro algebra with geometric generators given by the symmetric combination $ L_n + \bar L_n = z^{1+n}\partial_z + \bar z^{1+n}\partial_{\bar z}$. By Eq.~(\ref{eq:time_reversal}), $T (L_n + \bar L_n) T = (-1)^n (L_n + \bar L_n)$, so a level-$k$ descendant carries $T$-charge $(-1)^k$ relative to its primary, which implies $T$-charge alternates level-by-level.

\setcounter{equation}{0}
\renewcommand{\theequation}{D\arabic{equation}}
\makeatletter
\renewcommand{\theHequation}{D\arabic{equation}}
\makeatother

\textit{Free-fermion evaluation of causation---}Here we detail the free-fermion evaluation of causation in the critical Ising chain. A Jordan-Wigner transformation,
\begin{align}\label{eq:jw}
    \gamma_n &= (-1)^n (X_1 \cdots X_{n-1})\, Z_n \nonumber \\
    \tilde\gamma_n &= (-1)^n (X_1 \cdots X_{n-1})\, Y_n,
\end{align}
maps the critical Ising chain to the critical Majorana chain, $H = i\sum_n ( \tilde\gamma_n\gamma_n + \tilde\gamma_n\gamma_{n+1})$. To diagonalize the Hamiltonian, we introduce new Majorana modes
\begin{align}\label{eq:modes}
    \eta_k &= \frac{2}{\sqrt{2L+1}}\sum_{n=1}^{L}\sin\!\big((2n\!-\!1)k\big)\,\gamma_n \nonumber \\
    \tilde\eta_k &= \frac{2}{\sqrt{2L+1}}\sum_{n=1}^{L}\sin(2nk)\,\tilde\gamma_n,
\end{align}with $k = \pi a/(2L+1)$, $a = 1,\dots,L$, which satisfy the Majorana anti-commutation relations $\{\eta_k,\eta_q\} = \{\tilde\eta_k,\tilde\eta_q\} = 2\delta_{kq}$ and $\{\eta_k,\tilde\eta_q\} = 0$. Inverting Eq.~(\ref{eq:modes}) and substituting into $H$ yields $H = 2 i \sum_k \cos (k) \tilde \eta_k \eta_k$. Pairing each $\eta_k,\tilde\eta_k$ into a complex fermion, $c_k = (\eta_k - i\tilde\eta_k)/2$ and $c_k^\dagger = (\eta_k + i\tilde\eta_k)/2$, gives \begin{align}
    H = 4\sum_k \cos(k)\big(c_k^\dagger c_k - \tfrac12\big).
\end{align}Since $\cos(k) > 0$ for all $k$, the ground state $\ket{\psi_0}$ is the vacuum of the $c_k$.

Under the Jordan-Wigner map, Eq.~(\ref{eq:jw}), each operator in $\braket{O_1 G O_L}$ can be written with Majoranas. For example, $O_1 = Y_1 X_2$ (Fig.~\ref{fig:figure1}a) maps to $i\,\tilde\gamma_1\tilde\gamma_2\gamma_2$. Inverting Eq.~(\ref{eq:modes}) then expresses $O_1$ and $O_L$ in the mode operators $c_k$. Since $G$ is diagonal in the corresponding Fock basis, $\braket{O_1 G O_L}$ can be evaluated numerically.

\end{document}